\newcommand*{\samethanks}[1][\value{footnote}]{\footnotemark[#1]}
\theoremstyle{definition}
\newtheorem{theorem}{Theorem}[section]
\newtheorem{definition}[theorem]{Definition}
\newtheorem{corollary}[theorem]{Corollary}
\newtheorem{lemma}[theorem]{Lemma}
\newtheorem{remark}[theorem]{Remark}
\newcommand{\bin}{\{0,1\}}
\newcommand{\ra}{\rightarrow}
\newcommand{\ol}[1]{\overline{#1}}
\newcommand{\Z}{\mathbb{Z}}
\newcommand{\wt}{\mathsf{wt}}
\newcommand{\cequal}{\sim}
\newcommand{\insto}[1]{\overset{{#1}}{\to}}
\newlength{\ralen}
\def\club{\clubsuit}
\def\heart{\heartsuit}
\newcommand{\red}{\raisebox{\ralen}{\framebox(11,13){${\scriptstyle \heart}$}}}
\newcommand{\blk}{\raisebox{\ralen}{\framebox(11,13){${\scriptstyle \club}$}}}
\newcommand{\back}{\raisebox{\ralen}{\framebox(11,13){\bf ?}}}
\title{Cyclic Equalizability of Words and Its Application to Card-Based Cryptography}
\author{Kazumasa Shinagawa\thanks{University of Tsukuba, Ibaraki, Japan} \thanks{National Institute of Advanced Industrial Science and Technology, Tokyo, Japan} \and Koji Nuida\thanks{Institute of Mathematics for Industry, Kyushu University, Fukuoka, Japan} \samethanks[2]}
\begin{document}

\maketitle              % typeset the header of the contribution
\begin{abstract}
Card-based cryptography is a research area to implement cryptographic procedures using a deck of physical cards. In recent years, it has been found to be related to finite group theory and algebraic combinatorics, and is becoming more and more closely connected to the field of mathematics. In this paper, we discuss the relationship between card-based cryptography and combinatorics on words for the first time. In particular, we focus on cyclic equality of words. We say that a set of words are cyclically equalizable if they can be transformed to be cyclically equal by repeated simultaneous insertion of letters. The main result of this paper is to show that two binary words of equal length and equal Hamming weight are cyclically equalizable. As applications
of cyclic equalizability to card-based cryptography, we describe its applications to the information erasure problem and to single-cut full-open protocols. 
\end{abstract}
\section{Introduction}

\subsection{Background}

Suppose that $n$ players $P_1, P_2, \ldots, P_n$ each having inputs $x_1, x_2, \ldots, x_n \in \bin$, respectively, wish to compute the output value $f(x_1, x_2, \ldots, x_n) \in \bin$ for some function $f\colon \bin^n \ra \bin$. 
This is called distributed computing. 

\emph{Secure computation} is a kind of distributed computing where the inputs are not leaked from the protocol execution more than what is computed from the output value. 
It has many practical applications, e.g., a voting for two candidates: given a candidate 0 and a candidate 1, each player inputs one bit as a vote, and the output value is the index of the candidate who receives a majority of votes. 
Although secure computation is usually implemented on electronic computers, there is a line of research that implements secure computation using a deck of physical cards, and such a research area is called \emph{card-based cryptography}~\cite{BoerEC1989, KilianC1994,MizukiIJIS2014}. 

In card-based cryptography, we usually use a deck of cards whose faces are either $\blk$ or $\red$ and whose backs are the same symbol $\back\,$. 
Here, the cards with the same symbol are assumed to be indistinguishable.
We use a pair of cards to encode one bit information by $\blk\,\red = 0$ and $\red\,\blk = 1$, and a pair of face-down cards $\back\,\back$ encoding $x \in \bin$ is called a \emph{commitment} to $x$.
Given $n$ commitments to $x_1, x_2, \ldots, x_n \in \bin$, a protocol for $f: \bin^n \ra \bin$ computes the output value $f(x_1, x_2, \ldots, x_n)$ without revealing the inputs beyond the output by performing a sequence of physical operations such as shuffling. 

The history of card-based cryptography began with the five-card trick proposed by den Boer~\cite{BoerEC1989} in 1989. 
The purpose of den Boer's paper itself was to construct a cryptographic protocol, which is assumed to be implemented on electronic computers, and physical cards were used to illustrate the idea of the protocol. 
In 1994, Cr\'{e}peau and Kilian~\cite{KilianC1994} showed the feasibility of card-based protocol for any function and developed the basic theory of card-based cryptography. 
The turning point in card-based cryptography was the proposal of a rigorous computational model of card-based protocols known as the Mizuki--Shizuya model by Mizuki and Shizuya~\cite{MizukiIJIS2014}. 
After Mizuki--Shizuya's paper, this research field was revitalized as it became actively studied both in terms of protocol construction (e.g.,~\cite{ShinagawaDAM2021}) and impossibility proofs regarding the number of cards and the number of operations (e.g.,~\cite{KochAC2015,KastnerAC2017}). 
In recent years, the connection with mathematical fields has been deepening such as finite group theory~\cite{KanaiCommunAlgebra2024,ShinagawaFUN2024}, algebraic combinatorics~\cite{SugaICCE2022}, and formal method~\cite{KochAC2019}. 
However, such an attempt has only just begun, and the development of new connections with other mathematical fields is important for the further development of this field.

In this paper, we focus on combinatorics on words~\cite{Lothaire1983,Lothaire2002,Lothaire2005} and study a new notion of \emph{cyclic equalizability} of words.
As discussed in Section~\ref{ss:combinatorics_from_card}, this concept appears naturally in card-based cryptography, but has never been studied. 
This is the first study to discuss card-based cryptography and combinatorics on words, and our main contribution is to bring these two fields together.
We believe that this paper will lead to the development of the relation between card-based cryptography and combinatorics on words. 

\subsection{Cyclic Equalizability from Card-Based Cryptography}\label{ss:combinatorics_from_card}

In this section, we see that cyclic equalizability appears naturally in card-based cryptography by observing a protocol called the \emph{five-card trick}~\cite{BoerEC1989}, which uses a \emph{random cut}. 

A random cut is a shuffling operation that cyclically shifts a sequence of 
face-down
cards a random number of times. For a sequence of 
face-down
cards $(c_0, c_1, \ldots, c_{n-1})$, we obtain $(c_r, c_{r+1}, \ldots, c_{r+n-1})$, where $r \in \Z_n$ is a secret number chosen uniformly at random and the subscript is modulo $n$. 
Here, the secret number $r$ is hidden from all players and nobody can guess it. 
We use $\langle \cdot \rangle$ to denote the application of a random cut to a sequence of cards. 
Random cuts are considered to be the simplest shuffling operation proposed so far, since they can be easily performed by a well-known shuffling method called \emph{Hindu shuffle}.

The five-card trick is a protocol for the AND function $a\wedge b$ using a random cut. 
A secure computation protocol for the AND function can be used in the following situation:
\begin{quote}
{\it Alice and Bob are going to decide if they want to go on a date this weekend. 
However, Alice and Bob are very shy and embarrassed to tell each other directly whether they want to go on a date or not. 
This can be especially awkward if one of them wants to go on a date and the other does not.
Is there any way to know just whether they both want to go on a date or not?}
\end{quote}
If we assume that \lq\lq 1\rq\rq{} means \lq\lq I want to go on a date\rq\rq{} and \lq\lq 0\rq\rq{} means \lq\lq I don't want to go on a date\rq\rq, and Alice and Bob each have inputs $a,b\in \bin$, then this problem can be solved by secure computation for the AND function $a\wedge b$. 
The \emph{five-card trick} computing the AND function proceeds as follows. 

\begin{enumerate}
\item Make the input commitments to $a, b \in \bin$. 
\[
\underbrace{\back\,\back}_{a} \, \underbrace{\back\,\back}_{b}\,.
\]
\item Swap the left and right cards of the commitment to $a$.
\[
\underbrace{\back\,\back}_{a} \, \underbrace{\back\,\back}_{b}
~\ra~
\underbrace{\back\,\back}_{\ol{a}} \, \underbrace{\back\,\back}_{b}\,.
\]
\item Insert a face-down card $\red$ in the center.
\[
\underbrace{\back\,\back}_{\ol{a}} \, \underbrace{\back\,\back}_{b}
~\ra~
\underbrace{\back\,\back}_{\ol{a}} \, \underset{\heart}{\back} \, \underbrace{\back\,\back}_{b}\,.
\]
\item Apply a random cut. 
\[
\Bigl\langle\;
\overset{1}{\back}\,\overset{2}{\back}\,\overset{3}{\back}\,\overset{4}{\back}\,\overset{5}{\back}
\;\Bigr\rangle
~\ra~
\begin{cases}
\overset{1}{\back}\,\overset{2}{\back}\,\overset{3}{\back}\,\overset{4}{\back}\,\overset{5}{\back} & \mbox{w.p. $1/5$;}\\
\overset{2}{\back}\,\overset{3}{\back}\,\overset{4}{\back}\,\overset{5}{\back} \, \overset{1}{\back} & \mbox{w.p. $1/5$;}\\
\overset{3}{\back}\,\overset{4}{\back}\,\overset{5}{\back}\, \overset{1}{\back}\,\overset{2}{\back} & \mbox{w.p. $1/5$;}\\
\overset{4}{\back}\,\overset{5}{\back} \, \overset{1}{\back}\,\overset{2}{\back}\,\overset{3}{\back}& \mbox{w.p. $1/5$;}\\
\overset{5}{\back} \, \overset{1}{\back}\,\overset{2}{\back}\,\overset{3}{\back}\,\overset{4}{\back}& \mbox{w.p. $1/5$.}
\end{cases}
\]
\item Open all cards. Output $0$ if it is the left case and $1$ if it is the right case. 
\[
\begin{tabular}{ccc}
$\red\,\blk\,\red\,\blk\,\red$ &~~~~~ & $\blk\,\red\,\red\,\red\,\blk$ \\
$\red\,\red\,\blk\,\red\,\blk$ &~~~~~ & $\blk\,\blk\,\red\,\red\,\red$ \\
$\blk\,\red\,\red\,\blk\,\red$ &~~~~~ & $\red\,\blk\,\blk\,\red\,\red$ \\
$\red\,\blk\,\red\,\red\,\blk$ &~~~~~ & $\red\,\red\,\blk\,\blk\,\red$ \\
$\blk\,\red\,\blk\,\red\,\red$ &~~~~~ & $\red\,\red\,\red\,\blk\,\blk$ 
\end{tabular}.
\]
\end{enumerate}
The security of the five-card trick follows from the fact that for any given input $(a,b)$ such that $a\wedge b = 0$, the resulting sequence is chosen uniformly at random from five patterns on the left side due to the effect of the random cut. 
Indeed, if $a\wedge b = 0$, a sequence of cards just after Step 2 is given as follows:
\[
\begin{tabular}{cc}
$\red\,\blk\,\blk\,\red$ & $(a,b) = (0,0)$; \\
$\red\,\blk\,\red\,\blk$ & $(a,b) = (0,1)$; \\
$\blk\,\red\,\blk\,\red$ & $(a,b) = (1,0)$.
\end{tabular}
\]
Then, a sequence of cards just after Step 3 is given as follows:
\[
\begin{tabular}{cc}
$\red\,\blk\,\red\,\blk\,\red$ & $(a,b) = (0,0)$; \\
$\red\,\blk\,\red\,\red\,\blk$ & $(a,b) = (0,1)$; \\
$\blk\,\red\,\red\,\blk\,\red$ & $(a,b) = (1,0)$. 
\end{tabular}
\]
Since these sequences are cyclically equal (see Definition~\ref{def:cyclically_equal}), they become indistinguishable after a random cut is applied. 
We get three cyclically equal sequences by inserting $\red$ in the middle. 
This is the key to understanding the structure of the five-card trick.

Let us reconsider the above as a problem of combinatorics on words.
From now on, we consider words on $\Sigma$ for a finite set of letters $\Sigma$. 
We say that $k$ words $w_1, w_2, \ldots, w_k \in \Sigma^n$ are \emph{cyclically equalizable} if all words can be transformed into cyclically equal words by repeated simultaneous insertion of letters, i.e., by inserting the same letters at the same positions (see Definition~\ref{def:cyclically-equalizable}).
The five-card trick corresponds to three binary words $1001, 1010, 0101$ with $\clubsuit = 0$ and $\heartsuit = 1$, which are cyclically equalizable by inserting $1$ in the middle. 
In this paper, we consider a question of what is the necessary and sufficient condition for the cyclic equalizability.
This is a problem of combinatorics on words that arises naturally from card-based cryptography. 

\subsection{Our Contribution}

In this paper, we show that two binary words $w_1, w_2 \in \bin^n$ are cyclically equalizable if and only if the Hamming weights of $w_1$ and $w_2$ are equal (Theorem~\ref{thm:criterion_for_two_binary_words}). 
To show the main theorem, we first show that the cyclic equalizability does not change before and after simultaneous insertion (Theorem~\ref{thm:equalizability_is_invariant_under_insertion}). 
Thanks to this, in the proof of Theorem~\ref{thm:criterion_for_two_binary_words}, we can assume that $w_1$ and $w_2$ have no common letters on the same position. 
This makes the discussion easy to understand.

As applications
of cyclic equalizability to card-based cryptography, we describe its applications to the information erasure problem and single-cut full-open protocols in Section~\ref{sec:application}. 
In particular, we discuss the fact that cyclic equalizability is equivalent to the possibility of information erasure (Section~\ref{ss:information_erasure}), and we show a lower bound on the number of cards in single-cut full-open protocols (Section~\ref{ss:scfo}). 

\section{Basic Notations}

Let $\Sigma$ be a finite set of \emph{letters}. 
For a subset $\Delta \subseteq \Sigma$, we denote by $\Delta^*$ the set of all finite sequences of letters in $\Delta$. 
Its elements $w \in \Delta^*$ are called \emph{words} over $\Delta$. 
The number of letters in $w$ is called the \emph{length} of $w$. 
For a letter $a \in \Sigma$ and a non-negative integer $\mu$, a word $aa\cdots a$ that repeats $a$ by $\mu$ times is denoted by $a^{\mu}$. 
For a positive integer $n$ and an integer $m$, the remainder of $m$ divided by $n$ (taken to be between $0$ and $n-1$) is denoted by $m \bmod n$.
For a special case, words over $\bin$ are called \emph{binary words}. 
For a binary word $w \in \{0,1\}^*$, the number of $1$'s that appear in $w$ is called the \emph{Hamming weight} of $w$ and is denoted by $\wt(w)$.

We identify $k$ words of equal length $w_i = a_{i,0} a_{i,1} \cdots a_{i,n-1} \in \Sigma^n$ ($i = 1,\dots,k$) with a $k \times n$ matrix formed by placing the word $w_i$ in the $i$-th row as follows:
\[
\begin{pmatrix}
w_1 \\
w_2 \\
\vdots \\
w_k \\
\end{pmatrix}
= 
\begin{pmatrix}
a_{1,0} & a_{1,1} & a_{1,2} & \cdots & a_{1,n-1} \\
a_{2,0} & a_{2,1} & a_{2,2} & \cdots & a_{2,n-1} \\
\vdots & \vdots & \vdots & \ddots & \vdots \\
a_{k,0} & a_{k,1} & a_{k,2} & \cdots & a_{k,n-1} \\
\end{pmatrix}.
\]
We also denote a column vector consisting of $k$ letters $c_1,\dots,c_k$ from the top by $(c_1,\dots,c_k)^{\mathrm{T}}$ or simply $(c_1 \cdots c_k)^{\mathrm{T}}$ without commas.

\begin{definition}[Cyclically Equal]
    \label{def:cyclically_equal}
    Two words $w_i = a_{i,0} a_{i,1} \cdots a_{i,n-1} \in \Sigma^n$ ($i = 1,2$) of equal length are said to be \emph{cyclically equal} if there exists an integer $\delta$ such that $a_{2,j} = a_{1,j + \delta \bmod n}$ for all $0 \leq j \leq n-1$.
    We denote this relation by $w_1 \cequal w_2$.
    (Note that $\cequal$ is an equivalence relation on $\Sigma^n$.)
    Similarly, $k$ words of equal length $w_i \in \Sigma^n$ ($i = 1,2,\dots,k$) are said to be \emph{cyclically equal} if any pair of words $w_i, w_j$ are cyclically equal.
\end{definition}

\begin{definition}[$\Delta$-Insertion/Deletion]
    \label{def:Delta-insertion}
    Let $\Delta \subseteq \Sigma$.
    Let $w_i \in \Sigma^n$ ($1 \leq i \leq k$) be $k$ words of length $n$. 
    For $w_1, w_2, \dots, w_k$, a \emph{$\Delta$-insertion} is an operation that transforms each $w_i = a_{i,0} a_{i,1} \cdots a_{i,n-1}$ into a word $u_0 a_{i,0} u_1 a_{i,1} \cdots u_{n-1} a_{i,n-1} u_n$, where $u_0,u_1,\dots,u_n \in \Delta^*$ are $n+1$ words over $\Delta$. 
    We write $(w_1,\dots,w_k)\insto{\Delta} (w'_1,\dots,w'_k)$ if $(w'_1,\dots,w'_k)$ is obtained from $(w_1,\dots,w_k)$ by a $\Delta$-insertion. 
    The reverse operation of a $\Delta$-insertion is called a \emph{$\Delta$-deletion}. 
\end{definition}

We simply write \lq\lq $c$-insertion\rq\rq{} instead of {}\lq\lq $\{c\}$-insertion\rq\rq{}.
Note that the lengths of the words obtained by a $\Delta$-insertion or a $\Delta$-deletion are all equal.

\begin{remark}
A $\Delta$-insertion means repetition of an insertion of a column vector $(cc \cdots c)^{\mathrm{T}}$ ($c \in \Delta$) in the matrix representation of $w_1, w_2, \dots, w_k$ as follows:
\[
\begin{pmatrix}
a_{1,0} & \cdots & a_{1,i} & a_{1,i+1} & \cdots & a_{1,n-1} \\
a_{2,0} & \cdots & a_{2,i} & a_{2,i+1} & \cdots & a_{2,n-1} \\
\vdots & \vdots & \vdots & \vdots & \ddots & \vdots \\
a_{k,0} & \cdots & a_{k,i} & a_{k,i+1} & \cdots & a_{k,n-1} \\
\end{pmatrix}
~\rightarrow~
\begin{pmatrix}
a_{1,0} & \cdots & a_{1,i} & c & a_{1,i+1} & \cdots & a_{1,n-1} \\
a_{2,0} & \cdots & a_{2,i} & c & a_{2,i+1} & \cdots & a_{2,n-1} \\
\vdots & \vdots & \vdots & \vdots & \vdots & \ddots & \vdots \\
a_{k,0} & \cdots & a_{k,i} & c & a_{k,i+1} & \cdots & a_{k,n-1} \\
\end{pmatrix}.
\]
Conversely, a $\Delta$-deletion means repetition of a deletion of a column vector $(cc \cdots c)^{\mathrm{T}}$ ($c \in \Delta$) in the matrix representation of $w_1, w_2, \dots, w_k$ as follows:
\[
\begin{pmatrix}
a_{1,0} & \cdots & a_{1,i} & c & a_{1,i+1} & \cdots & a_{1,n-1} \\
a_{2,0} & \cdots & a_{2,i} & c & a_{2,i+1} & \cdots & a_{2,n-1} \\
\vdots & \vdots & \vdots & \vdots & \vdots & \ddots & \vdots \\
a_{k,0} & \cdots & a_{k,i} & c & a_{k,i+1} & \cdots & a_{k,n-1} \\
\end{pmatrix}
~\rightarrow~
\begin{pmatrix}
a_{1,0} & \cdots & a_{1,i} & a_{1,i+1} & \cdots & a_{1,n-1} \\
a_{2,0} & \cdots & a_{2,i} & a_{2,i+1} & \cdots & a_{2,n-1} \\
\vdots & \vdots & \vdots & \vdots & \ddots & \vdots \\
a_{k,0} & \cdots & a_{k,i} & a_{k,i+1} & \cdots & a_{k,n-1} \\
\end{pmatrix}.
\]
\end{remark}

\begin{definition}[Cyclically Equalizable]
    \label{def:cyclically-equalizable}
    Let $\Delta \subseteq \Sigma$.
    We say that $k$ words $w_i \in \Sigma^n$ ($1 \leq i \leq k$) are \emph{$\Delta$-cyclically equalizable} if there exist $w'_1,\dots,w'_k \in \Sigma^*$ that are cyclically equal such that $(w_1,\dots,w_k)\insto{\Delta} (w'_1,\dots,w'_k)$. 
    When $\Delta = \Sigma$, we simply say that they are cyclically equalizable. 
\end{definition}

\begin{remark}
    \label{rem:initial_subword_is_not_needed}
    In Definition~\ref{def:Delta-insertion}, we can assume without loss of generality that $u_0$ is an empty word. 
    This is because whether $u_0 a_{i,0} u_1 a_{i,1} \cdots u_{n-1} a_{i,n-1} u_n$ ($1 \leq i \leq k$) are cyclically equal is not changed by moving $u_0$ to the rightmost. 
\end{remark}

Given $k$ words $w_i \in \Sigma^n$ ($1 \leq i \leq k$), the \emph{$\Delta$-cyclic equalizability problem} asks whether they are $\Delta$-cyclically equalizable or not, and how to insert if it is possible. 
In Section~\ref{ss:information_erasure}, we will see that the cyclic equalizability problem is equivalent to the \emph{information erasure problem} in card-based cryptography as an application of cyclic equalizability. 

\section{$\Delta$-Insertion Preserves Cyclic Equalizability}

We prove the following theorem.

\begin{theorem}
    \label{thm:equalizability_is_invariant_under_insertion}
    Let $\Delta \subseteq \Sigma$. 
    Suppose that we have $(w_1,\dots,w_k) \insto{\Delta} (w'_1,\dots,w'_k)$ for $2k$ words $w_i, w'_i \in \Sigma^*$ ($1 \leq i  \leq k$). 
    Then the following statements are equivalent.
    \begin{enumerate}
        \item $w_1,\dots,w_k$ are $\Delta$-cyclically equalizable.
        \item $w'_1,\dots,w'_k$ are $\Delta$-cyclically equalizable.
    \end{enumerate}
\end{theorem}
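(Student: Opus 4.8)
The plan is to prove the two implications separately. The implication (2)~$\Rightarrow$~(1) is the easy direction and follows from a composition (transitivity) property of $\Delta$-insertion, while (1)~$\Rightarrow$~(2) is the heart of the matter and needs a device for preserving cyclic equality while making room for the newly inserted letters. First I would record the elementary fact that $\insto{\Delta}$ is a preorder on tuples of words: it is reflexive (insert empty words in every gap) and transitive. Transitivity holds because if $(w_1,\dots,w_k)\insto{\Delta}(w''_1,\dots,w''_k)\insto{\Delta}(w'_1,\dots,w'_k)$, then each $w'_i$ is obtained from $w_i$ by keeping the letters of $w_i$ in order and filling the gaps between them with words over $\Delta$, namely the concatenations of the first-step insertions with the second-step insertions placed around them, using that $\Delta^*$ is closed under concatenation. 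Granting this, (2)~$\Rightarrow$~(1) is immediate: if $w'_1,\dots,w'_k$ are $\Delta$-cyclically equalizable, witnessed by cyclically equal $v_1,\dots,v_k$ with $(w'_1,\dots,w'_k)\insto{\Delta}(v_1,\dots,v_k)$, then composing with the hypothesis gives $(w_1,\dots,w_k)\insto{\Delta}(v_1,\dots,v_k)$, so the same $v_i$ witness the $\Delta$-cyclic equalizability of $w_1,\dots,w_k$.

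For (1)~$\Rightarrow$~(2) I would first reduce to a single-column insertion. By the Remark following Definition~\ref{def:Delta-insertion}, the given $\Delta$-insertion factors as a finite chain of single-column insertions, each inserting one column $(c\cdots c)^{\mathrm T}$ with $c\in\Delta$. It therefore suffices to prove that a single-column insertion preserves $\Delta$-cyclic equalizability in the forward direction and then chain this assertion along the factorization. Thus I may assume $(w'_1,\dots,w'_k)$ is obtained from $(w_1,\dots,w_k)$ by inserting a single letter $c\in\Delta$ in the same gap, say between $a_{i,m-1}$ and $a_{i,m}$, of every $w_i$, and I may assume that $w_1,\dots,w_k$ are $\Delta$-cyclically equalizable, witnessed by cyclically equal $v_1,\dots,v_k$ with $(w_1,\dots,w_k)\insto{\Delta}(v_1,\dots,v_k)$.

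The key construction is to insert $c$ \emph{uniformly}. Let $\phi$ be the $c$-insertion that places a copy of $c$ immediately after every letter, and set $v'_i:=\phi(v_i)$. Two things must be checked. First, $\phi$ preserves cyclic equality: placing a $c$ after each position is rotation-equivariant, since a cyclic shift of $v_i$ by $\delta$ becomes a cyclic shift of $\phi(v_i)$ by $2\delta$; hence the $v'_i$ are again pairwise cyclically equal. Second, $(w'_1,\dots,w'_k)\insto{\Delta}(v'_1,\dots,v'_k)$: because $(w_1,\dots,w_k)\insto{\Delta}(v_1,\dots,v_k)\insto{\Delta}(v'_1,\dots,v'_k)$, every original letter of $w_i$ still occurs in $v'_i$ in order with only $\Delta$-letters around it, and the copy of $c$ that $\phi$ places immediately after the image of $a_{i,m-1}$ can be designated as the inserted letter of $w'_i$; all remaining inserted symbols ($q$-letters and the other copies of $c$) lie in $\Delta$, yielding a valid parsing $w'_i\insto{\Delta}v'_i$. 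Therefore $w'_1,\dots,w'_k$ are $\Delta$-cyclically equalizable, which completes the single-column step and hence, by chaining, the whole implication.

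I expect the main obstacle to be precisely the preservation of cyclic equality in the construction of $v'$. The naive attempt---inserting the extra $c$ at the same linear position into each $v_i$---fails, because the common cyclic shift aligning the $v_i$ carries a fixed linear position to different places on the necklace; for instance $01\cequal 10$ but $021\not\cequal 120$. Inserting $c$ at every position sidesteps this by being manifestly rotation-equivariant while simultaneously guaranteeing a copy of $c$ in the required gap of each $v_i$. Recognizing this uniform insertion as the right move, rather than trying to insert a single symbol, is the crux of the argument.
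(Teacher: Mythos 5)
Your proof is correct and follows essentially the same route as the paper: the easy direction via transitivity of $\insto{\Delta}$, and the hard direction via the rotation-equivariant ``insert $c$ after every letter'' map, which is exactly the paper's Lemma~\ref{lem:inserting_equal_number_of_zeroes} (with $\mu=1$). The only cosmetic difference is that you decompose into single-column insertions, whereas the paper reduces to a general $c$-insertion and pads every gap up to $\nu=\max\mu_j$ copies of $c$ before applying the same uniform-insertion device.
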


The following lemma is given for the proof of Theorem~\ref{thm:equalizability_is_invariant_under_insertion}.

\begin{lemma}
    \label{lem:inserting_equal_number_of_zeroes}
    Suppose that $w_i = a_{i,0} a_{i,1} \cdots a_{i,n-1} \in \Sigma^n$ ($1 \leq i \leq k$) are cyclically equal. 
    Then for any $c \in \Sigma$ and $\mu \geq 0$, $\widetilde{w}_i = a_{i,0} c^{\mu} a_{i,1} c^{\mu} \cdots a_{i,n-1} c^{\mu} \in \Sigma^{(\mu+1)n}$ ($1 \leq i \leq k$) are also cyclically equal. 
\end{lemma}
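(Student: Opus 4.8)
The plan is to establish the pairwise statement first — namely that $\widetilde{w}_1 \cequal \widetilde{w}_2$ whenever $w_1 \cequal w_2$ — and then to deduce the general $k$-word case from the fact (noted in Definition~\ref{def:cyclically_equal}) that cyclic equality is an equivalence relation. Indeed, since the $w_i$ are cyclically equal, every pair $w_i, w_j$ is cyclically equal; applying the pairwise result to each pair $(w_1, w_i)$ shows $\widetilde{w}_1 \cequal \widetilde{w}_i$ for all $i$, and transitivity then gives that $\widetilde{w}_1, \dots, \widetilde{w}_k$ are all cyclically equal.

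For the pairwise case, I would first fix notation for the inserted words. Writing $N = (\mu+1)n$, each position $p$ of $\widetilde{w}_i$ decomposes uniquely as $p = (\mu+1)q + s$ with $0 \le q \le n-1$ and $0 \le s \le \mu$; the letter of $\widetilde{w}_i$ at position $p$ is $a_{i,q}$ when $s = 0$ and is $c$ when $1 \le s \le \mu$. By hypothesis there is an integer $\delta$ with $a_{2,j} = a_{1,\, j+\delta \bmod n}$ for all $j$. The natural candidate shift for the inserted words is $\widetilde{\delta} = (\mu+1)\delta$, and the entire proof reduces to verifying that this $\widetilde{\delta}$ witnesses $\widetilde{w}_1 \cequal \widetilde{w}_2$, i.e.\ that the letter of $\widetilde{w}_2$ at position $p$ equals the letter of $\widetilde{w}_1$ at position $p + \widetilde{\delta} \bmod N$ for every $p$.

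The verification splits along the two cases of the decomposition, and its arithmetic backbone is the identity $(\mu+1)(q+\delta) \bmod (\mu+1)n = (\mu+1)\big((q+\delta)\bmod n\big)$, which holds because multiplying by $\mu+1$ carries a reduction modulo $n$ to a reduction modulo $(\mu+1)n$. When $s = 0$, this identity shows that $p + \widetilde{\delta} \bmod N$ again lands on a ``letter'' position, namely $(\mu+1)r$ with $r = (q+\delta)\bmod n$, so the entry compared is $a_{1,r} = a_{1,\,(q+\delta)\bmod n}$, which equals $a_{2,q}$ by the original cyclic equality. When $1 \le s \le \mu$, one checks that $p + \widetilde{\delta} \bmod N = (\mu+1)r + s$ with the same $r$; the point here is the bound $(\mu+1)r + s \le (\mu+1)(n-1) + \mu = N - 1$, so adding $s$ produces no overflow past $N$, and hence this position lies in an inserted block and carries the letter $c$, matching the entry $c$ of $\widetilde{w}_2$ at position $p$.

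I expect the only delicate point to be exactly this modular bookkeeping: confirming that scaling the shift by $\mu+1$ is the correct choice, and that reduction modulo $N$ interacts cleanly with the block structure — in particular the no-overflow bound in the second case. Once the candidate shift and the scaling identity are in hand, everything else is a direct substitution into Definition~\ref{def:cyclically_equal}.
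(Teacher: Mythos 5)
Your proposal is correct and follows essentially the same route as the paper: reduce to the case $k=2$, take the shift $(\mu+1)\delta$, and check via the block decomposition $p=(\mu+1)q+s$ that both letter positions and inserted-$c$ positions match. The paper states this verification in one line, whereas you spell out the modular identity and the no-overflow bound explicitly, but the underlying argument is identical.
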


\begin{proof}
    It is sufficient to show the case where $k = 2$.
    From the assumption, there exists an integer $\delta$ such that $a_{2,j} = a_{1,j + \delta \bmod n}$ for all $0 \leq j \leq n-1$. 
    Let $\widetilde{w}_i = b_{i,0} b_{i,1} \cdots b_{i, (\mu+1) n - 1}$ ($b_{i,j} \in \Sigma$). 
    Then $b_{i,j} = a_{i,j/(\mu+1)}$ if $j$ is a multiple of $\mu+1$ and $b_{i,j} = c$ otherwise.
    Thus, for each $0 \leq j \leq (\mu+1) n - 1$, $b_{2,j} = b_{1,j + (\mu+1) \delta \bmod (\mu+1)n}$.
    Therefore $\widetilde{w}_i$ are also cyclically equal.
\end{proof}

\begin{proof}
    [\textbf{Theorem~\ref{thm:equalizability_is_invariant_under_insertion}}]
    Since it is trivial that the latter condition implies the former, we show the converse direction. 
    It is sufficient to show that the latter condition follows from the former condition when $(w_1,\dots,w_k)\insto{c} (w'_1,\dots,w'_k)$ for some letter $c \in \Delta$ because any $\Delta$-insertion can be decomposed into a series of such insertions. 

    Let $w_i = a_{i,0} a_{i,1} \cdots a_{i,n-1}$ ($a_{i,j} \in \Sigma$). 
    From Remark~\ref{rem:initial_subword_is_not_needed}, we can assume that $w'_i = a_{i,0} c^{\mu_0} a_{i,1} c^{\mu_1} \cdots a_{i,n-1} c^{\mu_{n-1}}$ for some common non-negative integers $\mu_0,\mu_1,\dots,\mu_{n-1}$. 
    Let $\nu := \max\{\mu_0,\mu_1,\dots,\mu_{n-1}\}$ and $\widehat{b} := b c^{\nu} \in \Sigma^{\nu+1}$ for each $b \in \Sigma$. 
    Then for $(w'_1,\dots,w'_k)$, a $c$-insertion yields $(u_1, \dots, u_k)$ for $u_i = \widehat{a_{i,0}} \widehat{a_{i,1}} \cdots \widehat{a_{i,n-1}} \in \Sigma^{(\nu+1)n}$. 
    From the assumption that $(w_1,\dots,w_k)$ are $\Delta$-cyclically equalizable and Remark~\ref{rem:initial_subword_is_not_needed}, we obtain 
    some common letters $d_{0,1},\dots,d_{0,\ell_0},\allowbreak\dots,d_{n-1,1},\dots,d_{n-1,\ell_{n-1}} \in \Delta$ such that 
    \[
    a_{i,0} d_{0,1} \cdots d_{0,\ell_0}~a_{i,1} d_{1,1} \cdots d_{1,\ell_1}~\cdots~a_{i,n-1} d_{n-1,1} \cdots d_{n-1,\ell_{n-1}}
    \]
    are cyclically equal. 
    Then, from Lemma~\ref{lem:inserting_equal_number_of_zeroes}, 
    \[
    \widetilde{u}_i 
    := \widehat{a_{i,0}} \widehat{d_{0,1}} \cdots \widehat{d_{0,\ell_0}}~\widehat{a_{i,1}} \widehat{d_{1,1}}\cdots \widehat{d_{1,\ell_1}}~ \cdots ~\widehat{a_{i,n-1}} \widehat{d_{n-1,1}} \cdots \widehat{d_{n-1,\ell_{n-1}}}
    \]
    are also cyclically equal. 
    Moreover, since $\widehat{d_{h,j}} \in \Delta^{\nu+1}$, we have 
    $(u_1,\dots,u_k) \insto{\Delta} (\widetilde{u}_1,\dots,\widetilde{u}_k)$. 
    As mentioned above, we have $(w'_1,\dots,w'_k) \insto{\Delta} (u_1,\dots,u_k)$, thus we also have 
    $(w'_1,\dots,w'_k) \insto{\Delta} (\widetilde{u}_1,\dots,\widetilde{u}_k)$, 
    therefore $(w'_1,\dots,w'_k)$ are $\Delta$-cyclically equalizable.
\end{proof}

The following corollary, which follows directly from Theorem~\ref{thm:equalizability_is_invariant_under_insertion}, is useful to reduce the $\Delta$-cyclic equalizability for a set of words to that of shorter words. 

\begin{corollary}
    \label{cor:equalizability_is_invariant_under_insertion}
    Suppose that $(w'_1,\dots,w'_k)\in (\Sigma^{n'})^k$ is obtained from $(w_1,\dots,w_k)\in (\Sigma^{n})^k$ by a $\Delta$-deletion for some subset $\Delta \subseteq \Sigma$. 
    Then the following statements are equivalent.
    \begin{enumerate}
        \item $w_1,\dots,w_k$ are $\Delta$-cyclically equalizable.
        \item $w'_1,\dots,w'_k$ are $\Delta$-cyclically equalizable.
    \end{enumerate}
\end{corollary}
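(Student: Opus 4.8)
The plan is to recognize that this corollary is simply Theorem~\ref{thm:equalizability_is_invariant_under_insertion} read in the reverse direction, exploiting the fact that a $\Delta$-deletion is \emph{by definition} the reverse of a $\Delta$-insertion (Definition~\ref{def:Delta-insertion}). So essentially no new work is required; the entire task is to line up the hypotheses correctly and invoke the theorem.

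First I would translate the hypothesis. Saying that $(w'_1,\dots,w'_k)$ is obtained from $(w_1,\dots,w_k)$ by a $\Delta$-deletion is, by the very definition of $\Delta$-deletion as the inverse of $\Delta$-insertion, equivalent to saying that $(w_1,\dots,w_k)$ is obtained from $(w'_1,\dots,w'_k)$ by a $\Delta$-insertion; that is, $(w'_1,\dots,w'_k) \insto{\Delta} (w_1,\dots,w_k)$. Since the deletion shortens the words, we have $n' \leq n$, and the shorter tuple $(w'_1,\dots,w'_k)$ is the one playing the role of the \emph{original} words in an insertion, with $(w_1,\dots,w_k)$ playing the role of the \emph{inserted} words.

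Then I would apply Theorem~\ref{thm:equalizability_is_invariant_under_insertion} verbatim, but with the roles of the two tuples interchanged relative to how they are named in the corollary: take the starting tuple in the theorem to be $(w'_1,\dots,w'_k)$ and the resulting tuple to be $(w_1,\dots,w_k)$. The theorem then asserts the equivalence of "$w'_1,\dots,w'_k$ are $\Delta$-cyclically equalizable" and "$w_1,\dots,w_k$ are $\Delta$-cyclically equalizable," which is exactly the equivalence of statements (2) and (1) claimed here.

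There is no genuine obstacle in this argument; the only point demanding attention is bookkeeping on the direction of the arrow, since the corollary states the hypothesis in terms of deletion (hence $w'$ short, $w$ long) while the theorem is phrased in terms of insertion. Once that matching is fixed, the corollary follows immediately with no further calculation.
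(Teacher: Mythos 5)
Your argument is correct and is exactly the intended reading of the paper's remark that the corollary ``follows directly'' from Theorem~\ref{thm:equalizability_is_invariant_under_insertion}: unwinding the definition of $\Delta$-deletion as the reverse of $\Delta$-insertion gives $(w'_1,\dots,w'_k)\insto{\Delta}(w_1,\dots,w_k)$, and the theorem applied to this insertion yields the stated equivalence. No further work is needed.
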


\section{Cyclic Equalizability of Two Binary Words}

Throughout this section, we assume that $\Sigma = \{0,1\}$.
The following theorem gives a complete characterization on cyclic equalizability of two binary words.

\begin{theorem}
    \label{thm:criterion_for_two_binary_words}
    Two binary words $w_1,w_2 \in \{0,1\}^n$ are cyclically equalizable if and only if $\wt(w_1) = \wt(w_2)$.
\end{theorem}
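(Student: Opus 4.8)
The plan is to prove the two directions separately, with essentially all of the content living in the sufficiency direction.

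\emph{Necessity.} First I would note that a simultaneous insertion inserts literally the same letters into $w_1$ and $w_2$, so if a total of $p$ copies of the letter $1$ are inserted then both Hamming weights increase by exactly $p$; in particular any simultaneous insertion preserves the difference $\wt(w_1)-\wt(w_2)$. Since a cyclic rotation preserves the number of $1$'s, cyclically equal words have equal Hamming weight. Hence if $w_1,w_2$ are cyclically equalizable, the witnessing words $w_1'\cequal w_2'$ satisfy $\wt(w_1')=\wt(w_2')$, and this forces $\wt(w_1)=\wt(w_2)$.

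\emph{Reduction for sufficiency.} Assume $\wt(w_1)=\wt(w_2)$. Using Corollary~\ref{cor:equalizability_is_invariant_under_insertion} I would delete every position $j$ at which $a_{1,j}=a_{2,j}$ (a constant column); each such deletion preserves both the equality of Hamming weights and, by the corollary, the property of being cyclically equalizable. After removing all constant columns the two words share no letter in any position, so over $\bin$ we are left with $w_2=\ol{w_1}$, and together with equal weights this forces the common length to be even, say $2t$, with weight $t$. Thus it suffices to prove that $w$ and $\ol{w}$ are cyclically equalizable whenever $w\in\bin^{2t}$ has weight $t$; that the reduction altered the words is harmless by Theorem~\ref{thm:equalizability_is_invariant_under_insertion}.

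\emph{The construction.} In the matrix picture the reduced instance has only two column types, $(1\,0)^{\mathrm{T}}$ (an ``$A$-mark'', where row $1$ carries the $1$) and $(0\,1)^{\mathrm{T}}$ (a ``$B$-mark''), exactly $t$ of each, in a fixed cyclic order. A simultaneous insertion now inserts constant columns $(0\,0)^{\mathrm{T}}$ or $(1\,1)^{\mathrm{T}}$ between the marks, and the two rows become cyclically equal precisely when the set of $1$-positions of row $1$ (the $A$-marks) is a cyclic rotation of the set of $1$-positions of row $2$ (the $B$-marks). I would try to achieve this using only $(0\,0)^{\mathrm{T}}$-columns: the number of $0$'s inserted in each of the $2t$ gaps assigns an integer distance $\geq 1$ between consecutive marks, and the goal becomes to choose these distances so that the cyclic sequence of gap-distances between consecutive $A$-marks coincides, as a cyclic sequence, with that between consecutive $B$-marks. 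Equivalently, I would realize the given cyclic arrangement of marks as two identical ``clocks'' carrying the same gap-distance pattern, one obtained from the other by a single rotation $\delta$; this $\delta$ is then exactly the shift witnessing $w_1'\cequal w_2'$.

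\emph{Main obstacle.} The crux is the feasibility of this spacing problem. Fixing a suitable order-preserving cyclic matching between the $A$-marks and the $B$-marks, the requirement that all $t$ displacements $\mathrm{pos}(B)-\mathrm{pos}(A)$ be equal becomes a homogeneous linear system of rank at most $t-1$ in the $2t$ gap variables, and what must be shown is that it always admits a solution in which every gap is a positive integer. This is the genuinely hard step, because the gaps are \emph{coupled}: each inserted column lies in exactly one $A$-gap and one $B$-gap at once, so the two gap sequences cannot be tuned independently. I would try to establish positivity either by induction on $t$ (peeling off an adjacent $AB$ pair, with the alternating arrangement as the trivial base case) or by a direct cycle-lemma argument, and I expect that the careful handling of this coupling, rather than any of the earlier steps, is where the real work lies.
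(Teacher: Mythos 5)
Your necessity argument and your reduction are correct and match the paper: simultaneous insertion shifts both Hamming weights by the same amount, and deleting all constant columns via Corollary~\ref{cor:equalizability_is_invariant_under_insertion} (with $\Delta=\{0,1\}=\Sigma$) reduces to the case $w_2=\ol{w_1}$ with $|w_1|=2t$ and $\wt(w_1)=t$. Your reformulation of the remaining task as a spacing problem --- choose how many constant columns to insert into each of the $2t$ gaps so that every displacement $\mathrm{pos}(B)-\mathrm{pos}(A)$ under a fixed order-preserving cyclic matching is the same --- is also a faithful equivalent of what must be shown. But that is where the proof stops: you explicitly identify the feasibility of this (inhomogeneous, coupled) system over non-negative integers as ``the genuinely hard step'' and then only name two candidate strategies (induction on $t$, or a cycle-lemma argument) without executing either. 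Since this step \emph{is} the entire content of the sufficiency direction, the proposal has a genuine gap rather than a complete proof.

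Two further points. First, you restrict yourself to inserting only $(0\,0)^{\mathrm{T}}$ columns; this is an unnecessary handicap, and a risky one --- the paper lists as an open problem whether two same-weight binary words are always $0$-cyclically equalizable, and its own construction freely uses both $0$- and $1$-insertions. Second, for comparison, the paper resolves the crux by a concrete greedy induction: it normalizes the reduced pair to the block form $w_1 = 1^{\mu_0}0^{\mu_1}\cdots 1^{\mu_{2N-2}}0^{\mu_{2N-1}}$, $w_2 = 0^{\mu_0}1^{\mu_1}\cdots$, defines a pair to be ``consistent up to the $j$-th slot'' when $\nu'_i+\mu'_i = \mu'_{(i+1)\bmod 2N}+\nu'_{(i+2)\bmod 2N}$ for all $i\le j$ (this is exactly your condition that the first $j+1$ displacements agree, with the matching being ``shift by one block''), and shows how to repair slot $j$ by inserting a single run $0^k$ or $1^k$ either after slot $j+1$ (if the discrepancy has one sign) or into every earlier slot of the same parity (if it has the other sign), preserving admissibility. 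The induction terminates because consistency up to slot $2N-3$ already forces consistency everywhere, using equal weight and equal length. Reproducing that case analysis, or an equivalent positivity argument for your linear system, is what your write-up still needs.
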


\begin{proof}
    Since the only if part is trivial, we show the if part. 
    Suppose that $\wt(w_1) = \wt(w_2)$. From Corollary~\ref{cor:equalizability_is_invariant_under_insertion}, we can assume that no more $\{0,1\}$-deletion can be applied to $w_1$ and $w_2$, i.e., each column in the matrix representation of the pair $(w_1,w_2)$ is either $(10)^{\mathrm{T}}$ or $(01)^{\mathrm{T}}$.
    Furthermore, since a cyclic shift of the columns of the matrix does not affect the claim, we can also assume that $w_1$ and $w_2$ are of the following form:
    \[
        \begin{split}
            w_1 &= 1^{\mu_0} 0^{\mu_1} 1^{\mu_2} 0^{\mu_3} \cdots 1^{\mu_{2N-2}} 0^{\mu_{2N-1}};\\
            w_2 &= 0^{\mu_0} 1^{\mu_1} 0^{\mu_2} 1^{\mu_3} \cdots 0^{\mu_{2N-2}} 1^{\mu_{2N-1}},
        \end{split}
    \]
    where each $\mu_j$ is a positive integer and it holds $\sum_{j=0}^{N-1} \mu_{2j} = \sum_{j=0}^{N-1} \mu_{2j+1}$ from the assumption $\wt(w_1) = \wt(w_2)$.
    We assume $N \geq 2$ since $w_1$ and $w_2$ are already cyclically equal if $N = 1$. 

    We say that a pair of binary words $(w'_1, w'_2)$ is \emph{admissible} if $\wt(w'_1) = \wt(w'_2)$ and they are of the following form:
    \[
        \begin{split}
            w'_1 &= 1^{\nu'_0} 1^{\mu'_0} 0^{\nu'_1} 0^{\mu'_1} 1^{\nu'_2} 1^{\mu'_2} 0^{\nu'_3} 0^{\mu'_3} \cdots 0^{\nu'_{2N-1}} 0^{\mu'_{2N-1}}; \\
            w'_2 &= 1^{\nu'_0} 0^{\mu'_0} 0^{\nu'_1} 1^{\mu'_1} 1^{\nu'_2} 0^{\mu'_2} 0^{\nu'_3} 1^{\mu'_3} \cdots  0^{\nu'_{2N-1}} 1^{\mu'_{2N-1}},
        \end{split}
    \]
    for some parameters $\nu'_i \geq 0$ and $\mu'_i \geq 1$ ($0 \leq i \leq 2N-1$). 
    The pair $(w_1,w_2)$ in the claim is admissible with $\nu'_i = 0$ and $\mu'_i = \mu_i$. 
    For an admissible pair $(w'_1, w'_2)$ and $0 \leq j \leq 2N-1$, we say that the pair is \emph{consistent} up to the $j$-th slot if the following holds for all $0 \leq i \leq j$:
    \[
    \nu'_i + \mu'_i = \mu'_{(i+1) \bmod 2N} + \nu'_{(i+2) \bmod 2N}.
    \]
    For example, the following pair $(w'_1, w'_2)$ is consistent up to the 1th slot if the underlined subsequences are equal:
    \[
        \begin{split}
            w'_1 &= \underline{1^{\nu'_0} 1^{\mu'_0} 0^{\nu'_1} 0^{\mu'_1}} 1^{\nu'_2} 1^{\mu'_2} 0^{\nu'_3} 0^{\mu'_3} \cdots 0^{\nu'_{2N-1}} 0^{\mu'_{2N-1}}; \\
            w'_2 &= 1^{\nu'_0} 0^{\mu'_0} 0^{\nu'_1} \underline{1^{\mu'_1} 1^{\nu'_2} 0^{\mu'_2} 0^{\nu'_3}} 1^{\mu'_3} \cdots  0^{\nu'_{2N-1}} 1^{\mu_{2N-1}}.
        \end{split}    
    \]
    Note that when $w'_1$ and $w'_2$ are consistent up to the $(2N-3)$-th slot, they are cyclically equal from the following reason. 
    If they are consistent up to the $(2N-3)$-th slot, they are also consistent up to the $(2N-2)$-th slot due to $\wt(w'_1) = \wt(w'_2)$, and furthermore, they are also consistent up to the $(2N-1)$-th slot since they are of equal length. 
    
    In the following, $\{0,1\}$-insertions for $w_1$ and $w_2$ are performed recursively to construct two cyclically equal binary words. 

    First, perform a $\{0,1\}$-insertion to be consistent up to the 0th slot. If $w_1$ and $w_2$ are already consistent up to the 0th slot, i.e., $\mu_0 = \mu_1$, then nothing needs to be done. So we assume that $\mu_0 \neq \mu_1$. 
    If $\mu_0 > \mu_1$, then they will be consistent up to the 0th slot by the following $1$-insertion:
    \[
        \begin{split}
            \widetilde{w_1} &= 1^{\mu_0} 0^{\mu_1} \underline{1^{\mu_0-\mu_1}} 1^{\mu_2} 0^{\mu_3} \cdots 1^{\mu_{2N-2}} 0^{\mu_{2N-1}}; \\
            \widetilde{w_2} &= 0^{\mu_0} 1^{\mu_1} \underline{1^{\mu_0-\mu_1}} 0^{\mu_2} 1^{\mu_3} \cdots 0^{\mu_{2N-2}} 1^{\mu_{2N-1}}.
        \end{split}
    \]
    If $\mu_0 < \mu_1$, then it is enough to do the following $1$-insertion:
    \[
        \begin{split}
            \widetilde{w_1} &= \underline{1^{\mu_1-\mu_0}} 1^{\mu_0} 0^{\mu_1} 1^{\mu_2} 0^{\mu_3} \cdots 1^{\mu_{2N-2}} 0^{\mu_{2N-1}}; \\
            \widetilde{w_2} &= \underline{1^{\mu_1-\mu_0}} 0^{\mu_0} 1^{\mu_1} 0^{\mu_2} 1^{\mu_3} \cdots 0^{\mu_{2N-2}} 1^{\mu_{2N-1}}.
        \end{split}
    \]
    In either case, the current pair $(\widetilde{w_1}, \widetilde{w_2})$ is admissible and consistent up to the 0th slot.

    Next, for $1 \leq j \leq 2N - 3$, when the current pair $(w'_1, w'_2)$ is admissible and consistent up to the $(j-1)$-th slot, we perform $\{0,1\}$-insertions to be consistent up to the $j$-th slot.
    Let $k_0 := \nu'_j + \mu'_j - (\mu'_{j+1} + \nu'_{j+2})$ and $k := |k_0|$.
    If $w'_1$ and $w'_2$ are already consistent up to the $j$-th slot, i.e., $k_0 = 0$, then nothing needs to be done. So we can assume $k_0 \neq 0$.
    If $k_0 > 0$, perform a $0$-insertion of $0^k$ just before $0^{\nu'_{j+2}}$ when $j$ is odd, and a $1$-insertion of $1^k$ just before $1^{\nu'_{j+2}}$ when $j$ is even. 
    The following is an example when $j$ is odd:
    \[
        \begin{split}
            \widetilde{w_1} &= \cdots 1^{\mu'_{j-1}} 0^{\nu'_j} 0^{\mu'_j} 1^{\nu'_{j+1}} 1^{\mu'_{j+1}} \underline{0^{k}} 0^{\nu'_{j+2}} 0^{\mu'_{j+2}} 1^{\nu'_{j+3}} \cdots; \\
            \widetilde{w_2} &= \cdots 0^{\mu'_{j-1}} 0^{\nu'_j} 1^{\mu'_j} 1^{\nu'_{j+1}} 0^{\mu'_{j+1}} \underline{0^{k}} 0^{\nu'_{j+2}} 1^{\mu'_{j+2}} 1^{\nu'_{j+3}} \cdots.
        \end{split}
    \]
    If $k_0 < 0$, for each $0 \leq i \leq j$ such that $i \equiv j \pmod{2}$, 
    a $0$-insertion of $0^k$ just before $0^{\nu'_i}$ is performed when $j$ is odd and a $1$-insertion of $1^k$ just before $1^{\nu'_i}$ when $j$ is even. 
    The following is an example when $j$ is odd:
    \[
        \begin{split}
            \widetilde{w_1} &= 1^{\nu'_0} 1^{\mu'_0} \underline{0^{k}} 0^{\nu'_1} 0^{\mu'_1} 1^{\nu'_2} 1^{\mu'_2} \underline{0^{k}} 0^{\nu'_3} 0^{\mu'_3}\cdots 1^{\mu'_{j-1}} \underline{0^{k}} 0^{\nu'_j} \cdots; \\
            \widetilde{w_2} &= 1^{\nu'_0} 0^{\mu'_0} \underline{0^{k}} 0^{\nu'_1} 1^{\mu'_1} 1^{\nu'_2} 0^{\mu'_2} \underline{0^{k}} 0^{\nu'_3} 1^{\mu'_3}\cdots 0^{\mu'_{j-1}} \underline{0^{k}} 0^{\nu'_j} \cdots.
        \end{split}
    \]
    In either case, the pair $(\widetilde{w_1}, \widetilde{w_2})$ is admissible and consistent up to the $j$-th slot. 
    
    By performing the above procedure up to $j = 2N-3$, we can make $w_1, w_2$ cyclically equal. 
    This completes the proof of Theorem \ref{thm:criterion_for_two_binary_words}.
\end{proof}

\section{Application to Card-Based Cryptography}\label{sec:application}

\subsection{Information Erasure}\label{ss:information_erasure}

Suppose that we have a sequence of face-down cards $\vec{x} = (x_1, x_2, \ldots, x_n) \in \Sigma^n$. 
We know that $\vec{x} \in S \subseteq \Sigma^n$ but do not know which one. 
Consider the protocol in the following.

\begin{enumerate}
\item By inserting additional cards, arrange the following sequence:
\[
\underset{x_1}{\back}\,\underbrace{\back\,\back\,\cdots\,\back}_{u_1}\,\underset{x_2}{\back}\,\underbrace{\back\,\back\,\cdots\,\back}_{u_2}\,\underset{x_3}{\back}\,\underbrace{\back\,\back\,\cdots\,\back}_{u_3}\,\cdots\,\underset{x_n}{\back}\,\underbrace{\back\,\back\,\cdots\,\back}_{u_n}\,,
\]
where $u_1, u_2, \ldots, u_n \in \Sigma^*$. 
\item Apply a random cut.
\item Open all cards.
\end{enumerate}

We say that the above protocol is \emph{information erasure} if the probability distribution of the opened symbols does not depend on $\vec{x} \in S$. 
Given $S \subseteq \Sigma^n$, the \emph{information erasure problem} asks whether information erasure is possible, and how to insert if it is possible. 

It is obvious that the information erasure problem can be solved if the cyclic equalizability problem for $k$ words is solved for any $k$. 
From Theorem~\ref{thm:criterion_for_two_binary_words}, the information erasure problem for $S \subseteq \bin^n$ with $|S| = 2$ is completely answered: for $S = \{w_1, w_2\}$, the information erasure problem is possible if and only if $\wt(w_1) = \wt(w_2)$ and how to insert is given in the proof of Theorem~\ref{thm:criterion_for_two_binary_words}. 
The information erasure problems for $S \subseteq \bin^n$ with $|S| \geq 3$ and for $S \subseteq \Sigma^n$ with $|\Sigma|\geq 3$ are open problems. 

\subsection{Single-Cut Full-Open Protocols}\label{ss:scfo}

Suppose that we have a sequence of face-down cards $\vec{x} = (x_1, \ol{x_1}, x_2, \ol{x_2}, \ldots, x_n, \ol{x_n}) \in \bin^{2n}$, i.e., $n$ commitments to $x_1, x_2, \ldots, x_n \in \bin$. 
Let $f\colon \bin^n \ra \bin$ be a function. 
Consider the protocol in the following. 

\begin{enumerate}
\item Permute the order of the sequence by a known permutation $\pi \in S_{2n}$:
\[
\underset{x_1}{\back}\,\underset{\ol{x_1}}{\back}\,
\underset{x_2}{\back}\,\underset{\ol{x_2}}{\back}\,
\cdots\,
\underset{x_n}{\back}\,\underset{\ol{x_n}}{\back}\,
~\rightarrow~
\underset{y_1}{\back}\,\underset{y_2}{\back}\,\underset{y_3}{\back}\,\underset{y_4}{\back}\,\cdots\,\underset{y_{2n}}{\back}\,,
\]
where $(y_1, y_2, \ldots, y_{2n}) = \pi(\vec{x})$. 
\item By inserting additional cards, arrange the following sequence:
\[
\underset{y_1}{\back}\,\underbrace{\back\,\back\,\cdots\,\back}_{u_1}\,\underset{y_2}{\back}\,\underbrace{\back\,\back\,\cdots\,\back}_{u_2}\,\underset{y_3}{\back}\,\underbrace{\back\,\back\,\cdots\,\back}_{u_3}\,\cdots\,\underset{y_{2n}}{\back}\,\underbrace{\back\,\back\,\cdots\,\back}_{u_{2n}}\,,
\]
where $u_1, u_2, \ldots, u_{2n} \in \bin^*$. 
Define $s(\vec{x}) := (y_1, u_1, y_2, u_2, \ldots, y_{2n}, u_{2n})$. 
\item Apply a random cut.
\item Open all cards.
\end{enumerate}

We say that the above protocol is a \emph{single-cut full-open protocol} for $f$ if there exist two binary words $z_0, z_1 \in \bin^*$ such that $z_0, z_1$ are not cyclically equal and two binary words $s(\vec{x})$ and $z_{f(\vec{x})}$ are cyclically equal. 
The number of cards in a protocol is defined as $|s(\vec{x})| = 2n + \sum_{i=1}^{2n}|u_i|$. 

In 1989, den Boer~\cite{BoerEC1989} proposed a single-cut full-open protocol for the AND function $x_1 \wedge x_2$ called the five-card trick (see Section~\ref{ss:combinatorics_from_card}), which is the first card-based protocol in the history. 
In 2014, Heather, Schneider, and Teague~\cite{HeatherFAOC2014} proposed a single-cut full-open protocol for the three-bit equality function, which outputs $1$ if $x_1 = x_2 = x_3$ and $0$ otherwise. 
In 2018, Shinagawa and Mizuki~\cite{ShinagawaICISC2018} rediscovered the Heather--Schneider--Teague's protocol and named it as the six-card trick. 
In addition, they designed a single-cut full-open protocol for the two-bit equality function, which is equivalent to a protocol for $x_1 \oplus x_2$, and posed an open problem of asking whether
there exists a single-cut full-open protocol for an $n$-bit equality function for $n \geq 4$. 
So far, only these three single-cut full-open protocols are known. 
It is an open problem to construct single-cut full-open protocols for other interesting functions. 

Based on the cyclic equalizability, we can derive a lower bound on the number of cards in single-cut full-open protocols. 
For $f\colon \bin^n \ra \bin$ and $b \in \bin$, define $N_b(f) := |\{x \mid f(x) = b\}|$. 
From the definition, we have $N_0(f) + N_1(f) = 2^n$ for any function $f$. 
First, we have a simple lemma as follows. 

\begin{lemma}\label{lemma:cyclic_num}
If $k$ different words $w_1, w_2, \ldots, w_k \in \Sigma^*$ are cyclically equal, then the length of the word $|w_i|$ is at least $k$. 
\end{lemma}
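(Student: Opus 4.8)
The plan is to argue by a simple orbit-counting argument on cyclic shifts. First I would observe that cyclic equality forces all $k$ words to share a common length: by Definition~\ref{def:cyclically_equal}, two cyclically equal words lie in the same $\Sigma^n$, so there is a single integer $n$ with $|w_i| = n$ for every $i$, and it therefore suffices to prove $k \leq n$.

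Next, I would fix $w_1 = a_0 a_1 \cdots a_{n-1}$ and consider the cyclic shift map $\sigma$ on $\Sigma^n$ defined by $\sigma(a_0 a_1 \cdots a_{n-1}) = a_1 a_2 \cdots a_{n-1} a_0$. By Definition~\ref{def:cyclically_equal}, a word is cyclically equal to $w_1$ precisely when it equals $\sigma^{\delta}(w_1)$ for some integer $\delta$. Since $\sigma^n$ is the identity on $\Sigma^n$, every such shift already appears among $\sigma^0(w_1), \sigma^1(w_1), \ldots, \sigma^{n-1}(w_1)$; hence the set of all words cyclically equal to $w_1$ has at most $n$ elements.

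Finally, because $w_1, \ldots, w_k$ are pairwise cyclically equal, they are in particular all cyclically equal to $w_1$ (using that $\cequal$ is an equivalence relation, as noted in Definition~\ref{def:cyclically_equal}), so they all belong to this set of size at most $n$. As the $k$ words are assumed to be distinct, we conclude $k \leq n = |w_i|$, which is exactly the claim.

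I do not expect any real obstacle here, as the argument is essentially immediate: the whole content is that the cyclic shift operator $\sigma$ has order dividing $n$, so any single cyclic equivalence class inside $\Sigma^n$ can contain at most $n$ words. The only points that require a line of care are the bookkeeping that all $k$ distinct words genuinely lie in one common orbit (which follows from transitivity of $\cequal$) and that this orbit is exhausted by the $n$ shifts $\sigma^0(w_1), \ldots, \sigma^{n-1}(w_1)$.
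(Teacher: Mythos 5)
Your proposal is correct and is essentially the paper's own argument, just spelled out in more detail: the paper's one-line proof ("any word $w$ has at most $|w|$ words cyclically equal to it") is exactly your observation that the cyclic equivalence class of $w_1$ is exhausted by the $n$ shifts $\sigma^0(w_1),\dots,\sigma^{n-1}(w_1)$, combined with transitivity of $\cequal$ and distinctness of the $k$ words.
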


\begin{proof}
Any word $w \in \Sigma^*$ has at most $|w|$ words that are cyclically equal to $w$.
\end{proof}

From Lemma~\ref{lemma:cyclic_num}, we obtain a lower bound on the number of cards. 

\begin{theorem}
The number of cards in a single-cut full-open protocol for a function $f\colon \bin^n \ra \bin$ is at least $\max(N_0(f), N_1(f))$. 
\end{theorem}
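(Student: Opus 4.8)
The plan is to combine the injectivity of the assignment $\vec{x} \mapsto s(\vec{x})$ with Lemma~\ref{lemma:cyclic_num}. First I would record that all the words $s(\vec{x})$ share the same length, namely the number of cards $2n + \sum_{i=1}^{2n}|u_i|$, since the inserted words $u_1,\dots,u_{2n}$ are fixed by the protocol and do not depend on $\vec{x}$. Write $L$ for this common length; the goal is then to show $L \geq \max(N_0(f),N_1(f))$.

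The key structural fact is that the map $\vec{x}\mapsto s(\vec{x})$ is injective. Indeed, the commitment sequence $(x_1,\ol{x_1},\dots,x_n,\ol{x_n})$ is determined by $\vec{x}$ and conversely determines it, so distinct inputs yield distinct commitment sequences. Applying the fixed permutation $\pi\in S_{2n}$ preserves this distinctness, and interleaving the fixed words $u_1,\dots,u_{2n}$ at fixed positions cannot merge two distinct sequences into one. Hence $\vec{x}\neq\vec{x}'$ implies $s(\vec{x})\neq s(\vec{x}')$.

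Now I would fix $b \in \bin$ and examine the set $\{s(\vec{x}) : f(\vec{x}) = b\}$. By the definition of a single-cut full-open protocol, each such $s(\vec{x})$ is cyclically equal to $z_b$; since $\cequal$ is an equivalence relation on words of equal length, these words are pairwise cyclically equal. By injectivity they constitute $N_b(f)$ \emph{distinct} words, so Lemma~\ref{lemma:cyclic_num} gives $L \geq N_b(f)$. Applying this for both $b = 0$ and $b = 1$ yields $L \geq \max(N_0(f),N_1(f))$, which is the claimed bound.

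The only delicate step is the injectivity argument; everything else is a direct unwinding of the definitions followed by Lemma~\ref{lemma:cyclic_num}. I expect injectivity to hold cleanly, because the commitment encoding $x \mapsto (x,\ol{x})$ is itself a bijective encoding of each bit and the remaining operations (a fixed permutation and fixed insertions) are injective. Still, it is worth stating carefully, since it is precisely the step that converts ``distinct inputs'' into ``distinct words,'' which is exactly the hypothesis Lemma~\ref{lemma:cyclic_num} requires in order to turn the equivalence-class size into a lower bound on the length.
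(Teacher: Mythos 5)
Your proposal is correct and follows essentially the same route as the paper: for each $b\in\bin$ the $N_b(f)$ words $s(\vec{x})$ with $f(\vec{x})=b$ are distinct and pairwise cyclically equal, so Lemma~\ref{lemma:cyclic_num} bounds the common length from below. The paper phrases this via the truth table of $f$ and simply asserts the rows are distinct, whereas you spell out the injectivity of $\vec{x}\mapsto s(\vec{x})$; that is a worthwhile elaboration of the same argument, not a different one.
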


\begin{proof}
Let $T$ be the truth table of $f$ whose column corresponds to each symbol in $s(\vec{x})$, i.e., $T$ has $|s(\vec{x})|$ columns and $2^n$ rows. 
From the definition of single-cut full-open protocols, all $N_b(f)$ rows are different binary words and cyclically equal for each $b \in \bin$. 
From Lemma~\ref{lemma:cyclic_num}, the number of cards $|s(\vec{x})|$ is at least $\max(N_0(f), N_1(f))$. 
\end{proof}

\section{Conclusion}

In this paper, we show that two binary words $w_1, w_2 \in \bin^n$ are cyclically equalizable if and only if the Hamming weights of $w_1$ and $w_2$ are equal. 
We also give some applications of cyclic equalizability to the information erasure problem and single-cut full-open protocols. 

Research on cyclic equalizability has just begun and there are many important open problems. 
What is the necessary and sufficient condition for $k$ words $w_1, w_2, \ldots, w_k \in \Sigma^n$ to be cyclically equalizable?
If they are cyclically equalizable, it is also important to minimize the length of the resulting words. 
These general problems are obviously important, but the following more specific problems are also of interest.
\begin{itemize}
\item What is the condition for two words $w_1, w_2 \in \{0,1,2\}^n$ to be cyclically equalizable?
\item Are there two binary words $w_1, w_2 \in \bin^n$ of the same Hamming weight that are not $0$-cyclically equalizable?
\item Are there three binary words $w_1, w_2, w_3 \in \bin^n$ of the same Hamming weight that are not cyclically equalizable? (A counterexample exists for $0$-cyclic equalizability.
Indeed, $1001$, $1010$, and $1100$ are not $0$-cyclically equalizable.)
\item Is there a single-cut full-open protocol for the three-bit AND function? In particular, is there a permutation $\pi \in S_{6}$ such that seven binary words $\{\pi(x_1\ol{x_1}x_2\ol{x_2}x_3\ol{x_3}) \mid (x_1,x_2,x_3) \in \bin^3\setminus \{(1,1,1)\}\}$ are cyclically equalizable?
\item Is there a single-cut full-open protocol for the four-bit equality function? In particular, is there a permutation $\pi \in S_{8}$ such that 14 binary words $\{\pi(x_1\ol{x_1}x_2\ol{x_2}x_3\ol{x_3}x_4\ol{x_4}) \mid (x_1,x_2,x_3,x_4) \in \bin^4\setminus \{(0,0,0), (1,1,1)\}\}$ are cyclically equalizable?
\end{itemize}

\section*{Acknowledgement}

This work was supported by JSPS KAKENHI Grant Numbers 21K17702, 23H00479, and 25K14994, JST CREST Grant Number MJCR22M1, and Institute of Mathematics for Industry, Joint Usage/Research Center in Kyushu University: Short-term Visiting Researcher 2022a006, 2023a009, and 2024a015.

\bibliographystyle{abbrv}
\bibliography{card}

\begin{thebibliography}{10}

\bibitem{BoerEC1989}
B.~D. Boer.
\newblock More efficient match-making and satisfiability the five card trick.
\newblock In J.-J. Quisquater and J.~Vandewalle, editors, {\em Advances in
  Cryptology -- EUROCRYPT' 89}, volume 434 of {\em LNCS}, pages 208--217,
  Heidelberg, 1990. Springer.

\bibitem{KilianC1994}
C.~Cr{\'e}peau and J.~Kilian.
\newblock Discreet solitary games.
\newblock In D.~R. Stinson, editor, {\em Advances in Cryptology---CRYPTO' 93},
  volume 773 of {\em LNCS}, pages 319--330, Berlin, Heidelberg, 1994. Springer.

\bibitem{HeatherFAOC2014}
J.~Heather, S.~Schneider, and V.~Teague.
\newblock Cryptographic protocols with everyday objects.
\newblock {\em Formal Aspects Comput.}, 26(1):37--62, 2014.

\bibitem{KanaiCommunAlgebra2024}
K.~Kanai, K.~Miyamoto, K.~Nuida, and K.~Shinagawa.
\newblock Uniform cyclic group factorizations of finite groups.
\newblock {\em Communications in Algebra}, 52(5):2174--2184, 2024.

\bibitem{KastnerAC2017}
J.~Kastner, A.~Koch, S.~Walzer, D.~Miyahara, Y.~Hayashi, T.~Mizuki, and
  H.~Sone.
\newblock The minimum number of cards in practical card-based protocols.
\newblock In T.~Takagi and T.~Peyrin, editors, {\em Advances in
  Cryptology---ASIACRYPT 2017}, volume 10626 of {\em LNCS}, pages 126--155,
  Cham, 2017. Springer.

\bibitem{KochAC2019}
A.~Koch, M.~Schrempp, and M.~Kirsten.
\newblock Card-based cryptography meets formal verification.
\newblock In S.~D. Galbraith and S.~Moriai, editors, {\em Advances in
  Cryptology--ASIACRYPT 2019}, volume 11921 of {\em LNCS}, pages 488--517,
  Cham, 2019. Springer.

\bibitem{KochAC2015}
A.~Koch, S.~Walzer, and K.~H{\"a}rtel.
\newblock Card-based cryptographic protocols using a minimal number of cards.
\newblock In T.~Iwata and J.~H. Cheon, editors, {\em Advances in
  Cryptology---ASIACRYPT 2015}, volume 9452 of {\em LNCS}, pages 783--807,
  Berlin, Heidelberg, 2015. Springer.

\bibitem{Lothaire1983}
M.~Lothaire.
\newblock {\em Combinatorics on words}, volume~17.
\newblock Addison Wesley., 1983.

\bibitem{Lothaire2002}
M.~Lothaire.
\newblock {\em Algebraic combinatorics on words}, volume~90.
\newblock Cambridge university press, 2002.

\bibitem{Lothaire2005}
M.~Lothaire.
\newblock {\em Applied combinatorics on words}, volume 105.
\newblock Cambridge University Press, 2005.

\bibitem{MizukiIJIS2014}
T.~Mizuki and H.~Shizuya.
\newblock A formalization of card-based cryptographic protocols via abstract
  machine.
\newblock {\em Int. J. Inf. Secur.}, 13(1):15--23, 2014.

\bibitem{ShinagawaFUN2024}
K.~Shinagawa, K.~Kanai, K.~Miyamoto, and K.~Nuida.
\newblock How to covertly and uniformly scramble the 15 puzzle and rubik's
  cube.
\newblock In A.~Z. Broder and T.~Tamir, editors, {\em Fun with Algorithms},
  volume 291 of {\em LIPIcs}, pages 30:1--30:15, Dagstuhl, Germany, 2024.
  Schloss Dagstuhl.

\bibitem{ShinagawaICISC2018}
K.~Shinagawa and T.~Mizuki.
\newblock The six-card trick: Secure computation of three-input equality.
\newblock In K.~Lee, editor, {\em Information Security and Cryptology}, volume
  11396 of {\em LNCS}, pages 123--131, Cham, 2018. Springer.

\bibitem{ShinagawaDAM2021}
K.~Shinagawa and K.~Nuida.
\newblock A single shuffle is enough for secure card-based computation of any
  {Boolean} circuit.
\newblock {\em Discrete Applied Mathematics}, 289:248--261, 2021.

\bibitem{SugaICCE2022}
Y.~Suga.
\newblock A classification proof for commutative three-element semigroups with
  local {AND} structure and its application to card-based protocols.
\newblock In {\em 2022 IEEE International Conference on Consumer Electronics -
  Taiwan}, pages 171--172, NY, 2022. IEEE.

\end{thebibliography}

\appendix

\section{Concrete Example of Theorem~\ref{thm:criterion_for_two_binary_words} and Corollary~\ref{cor:equalizability_is_invariant_under_insertion}}

    \begin{enumerate}
    \item Let $w_1, w_2 \in \bin^{20}$ be two binary words of the same Hamming weight as follows:
    \[
        \begin{split}
            w_1 &= 1\dot{0}10\dot{\dot{1}}\ 001\dot{0}\dot{\dot{1}}\ \dot{0}1101\ \dot{0}1000; \\
            w_2 &= 00011\ 11001\ 00010\ 00111.
        \end{split}
    \]
    where they are separated by $5$ bits for ease of reading, and the columns of $(00)^{\mathrm{T}}$ are dotted at the top and the columns of $(11)^{\mathrm{T}}$ are double-dotted at the top.
    \item By applying a $1$-deletion, we can remove the double-dotted columns as follows:
    \[
        \begin{split}
            w'_1 &= 1\dot{0}100\ 01\dot{0}\dot{0}1\ 101\dot{0}1\ 000; \\
            w'_2 &= 00011\ 10000\ 01000\ 111.
        \end{split}
    \]
    \item By applying a $0$-deletion, we can remove the dotted columns as follows:
    \[
        \begin{split}
            w''_1 &= 11000\ 11101\ 1000; \\
            w''_2 &= 00111\ 00010\ 0111.
        \end{split}
    \]
    \item Now we are ready to apply the proof of Theorem \ref{thm:criterion_for_two_binary_words}. By applying a $1$-insertion, we can make it consistent up to the $0$th slot:
    \[
        \begin{split}
            &\underline{\overline{\mathbf{1}}11}00\ 01110\ 11000; \\
            &\mathbf{1}00\underline{11\ 1}0001\ 00111,
        \end{split}
    \]
    where the bold columns are the columns inserted so far (including this time), the overlined columns are the columns inserted this time, and the underlined subsequences are the matching patterns. Hereafter, we use these notations.
    \item We notice that they are already consistent up to $1$th slot since the next sequences just after the underlined sequences are both $000$. 
    The underlines are updated as:
    \[
        \begin{split}
            &\underline{\mathbf{1}1100\ 0}1110\ 11000; \\
            &\mathbf{1}00\underline{11\ 1000}1\ 00111.
        \end{split}
    \]
    \item By applying a $1$-insertion, we can make it consistent up to the $2$th slot:
    \[
        \begin{split}
            &\underline{\mathbf{1}1100\ 0111}0\ \overline{\mathbf{11}}110\ 00;\\
            &\mathbf{1}00\underline{11\ 10001\ \mathbf{11}}001\ 11.
        \end{split}
    \]
    \item By applying a $0$-insertion, we can make it consistent up to the $3$th slot:
    \[
        \begin{split}
            u_1 &= \underline{\mathbf{1}11\overline{\mathbf{0}}0\ 00111\ \overline{\mathbf{0}}0}\mathbf{11}1\ 1000; \\
            u_2 &= \mathbf{1}00\mathbf{0}\underline{1\ 11000\ \mathbf{0}1\mathbf{11}0\ 0}111.
        \end{split}
    \]
    Now $u_1, u_2$ are cyclically equal. Thus we can confirm that $w''_1, w''_2$ are cyclically equalizable. This corresponds to the end of the proof of Theorem \ref{thm:criterion_for_two_binary_words}.
    \item Next we apply Corollary \ref{cor:equalizability_is_invariant_under_insertion}. 
    Let $u'_1, u'_2$ be the binary words obtained by replacing $0$ and $1$ of $w'_1, w'_2$ by $\hat{0} := 01$ and $\hat{1} := 11$, respectively, as follows:
    \[
        \begin{split}
            u'_1 &= \hat{1}\hat{0}\hat{1}\hat{0}\hat{0}\ \hat{0}\hat{1}\hat{0}\hat{0}\hat{1}\ \hat{1}\hat{0}\hat{1}\hat{0}\hat{1}\ \hat{0}\hat{0}\hat{0}; \\
            u'_2 &= \hat{0}\hat{0}\hat{0}\hat{1}\hat{1}\ \hat{1}\hat{0}\hat{0}\hat{0}\hat{0}\ \hat{0}\hat{1}\hat{0}\hat{0}\hat{0}\ \hat{1}\hat{1}\hat{1}.
        \end{split}
    \]
    Then we have $(w_1,w_2) \insto{1} (u'_1, u'_2)$ because $(w_1,w_2)$ does not have two or more consecutive double-dotted columns $(11)^{\mathrm{T}}$.
    \item Let $u''_1, u''_2$ be the binary words obtained by replacing $0$ and $1$ of $w''_1, w''_2$ by $\tilde{0} := \hat{0}\hat{0}\hat{0} = 010101$ and $\tilde{1} := \hat{1}\hat{0}\hat{0} = 110101$, respectively, as follows:
    \[
        \begin{split}
            u''_1 &= \tilde{1}\tilde{1}\tilde{0}\tilde{0}\tilde{0}\ \tilde{1}\tilde{1}\tilde{1}\tilde{0}\tilde{1}\ \tilde{1}\tilde{0}\tilde{0}\tilde{0}; \\
            u''_2 &= \tilde{0}\tilde{0}\tilde{1}\tilde{1}\tilde{1}\ \tilde{0}\tilde{0}\tilde{0}\tilde{1}\tilde{0}\ \tilde{0}\tilde{1}\tilde{1}\tilde{1}.
        \end{split}
    \]
    Now we notice that $u''_1$ and $u''_2$ are obtained by inserting $\hat{0}$'s in $u'_1$ and $u'_2$ because $(u'_1,u'_2)$ has at most two consecutive columns $(\hat{0}\hat{0})^{\mathrm{T}}$. By combining this with $(w_1,w_2) \insto{1} (u'_1, u'_2)$ in Step 8, we have $(w_1,w_2) \xrightarrow{\{0,1\}} (u''_1,u''_2)$. 

    \item Finally, in the same operations to obtain $u_1$ and $u_2$ from $w''_1$ and $w''_2$, the following binary words are obtained by inserting $\tilde{0}$'s and $\tilde{1}$'s in $u''_1$ and $u''_2$ as follows:
    \[
        \begin{split}
            u'''_1&=\mathbf{\tilde{1}}\tilde{1}\tilde{1}\mathbf{\tilde{0}}\tilde{0}\ \tilde{0}\tilde{0}\tilde{1}\tilde{1}\tilde{1}\ \mathbf{\tilde{0}}\tilde{0}\mathbf{\tilde{1}\tilde{1}}\tilde{1}\ \tilde{1}\tilde{0}\tilde{0}\tilde{0}; \\
            u'''_2&=\mathbf{\tilde{1}}\tilde{0}\tilde{0}\mathbf{\tilde{0}}\tilde{1}\ \tilde{1}\tilde{1}\tilde{0}\tilde{0}\tilde{0}\ \mathbf{\tilde{0}}\tilde{1}\mathbf{\tilde{1}\tilde{1}}\tilde{0}\ \tilde{0}\tilde{1}\tilde{1}\tilde{1}.
        \end{split}
    \]
    Now $u'''_1, u'''_2$ are cyclically equal of length $19 \times 6 = 114$. Thus we can confirm that $w_1, w_2$ are cyclically equalizable. 
    \end{enumerate}

The length of the resulting binary words in the above construction depends on the order of the characters to be deleted. In fact, if we apply a $0$-deletion first and then a $1$-deletion, we have the resulting binary words of length $19 \times 4 = 76$. 
\end{document}